\documentclass[letterpaper, 10 pt, conference]{ieeeconf} 
\usepackage{graphicx}
\IEEEoverridecommandlockouts         
\usepackage{xcolor}
\usepackage{float}
\let\labelindent\relax
\usepackage{enumitem}
\setlist[enumerate]{leftmargin=*}
\newtheorem{assumption}{Assumption}
\newtheorem{theorem}{Theorem}
\newtheorem{remark}{Remark}
\usepackage{amsmath}
\usepackage{amssymb}
\usepackage{mathtools}
\usepackage{cancel}
\newcommand{\RR}{\mathbb{R}}
\def\J{\mathcal{J}}  
\def\L{\mathcal{L}}  
  
\def\S{\mathcal{S}} 
\def\P{\mathcal{P}} 
\def\V{\mathcal{V}} 

\title{\LARGE \bf
Wave-Based Bilateral Teleoperation between Nonlinear Manipulators with Direct Contact Force Feedback}

\author{G. Q. Bao Tran, Takanori Miyoshi, and Ho Duc Tho
\thanks{G.~Q.~B.~Tran is with Coordinated Science Laboratory, University of Illinois Urbana-Champaign, Urbana, IL 61801, USA (e-mail: {\tt\small baotran@illinois.edu}). His work is supported by the AFOSR MURI FA9550-23-1-0337 and NSF CMMI-2452534 grants. T.~Miyoshi is with Department of System Safety, Nagaoka University of Technology, 1603-1 Kamitomiokamachi, Niigata, Japan (e-mail: {\tt\small miyoshi@mech.nagaokaut.ac.jp}). H.~D.~Tho is with Department of Control Engineering and Automation, HCMC University of Technology (HCMUT), HCMC, Vietnam (e-mail: {\tt\small hoductho@hcmut.edu.vn}).}
}

\begin{document}

\maketitle
\thispagestyle{empty}
\pagestyle{empty}

\begin{abstract}
We study bilateral teleoperation between nonlinear, multi-DOF robotic manipulators in the presence of constant communication delays.
Unlike classical wave-transformation architectures that transmit a
coordinating force, we consider the case where
the environmental force is reflected to the master side to enhance teleoperation transparency. Since direct contact force feedback might destabilize the closed-loop system, we first develop a passivity-shortage characterization for the Euler--Lagrange remote system using a linear matrix inequality (LMI) approach. An upper strictly passive communication law is then employed to compensate for the computed passivity shortage so that the closed-loop stability under delays as well as position and force synchronization are preserved under appropriate conditions. Simulations with nonlinear $2$-DOF robotic manipulators in different settings illustrate our approach.
\end{abstract}

\section{INTRODUCTION}
\label{intro}

Bilateral teleoperation enables a human operator to interact with a remote environment through a master--slave robotic architecture. However, the presence of communication \emph{delays} may destabilize an otherwise stable closed-loop system, especially in force-reflecting teleoperation~\cite{yokokohji1999}. Passivity-based approaches have become a cornerstone for addressing delay-induced instability, starting from the scattering transformation framework in~\cite{anderson1988bilateral,spongtac} and its wave-transformation (WT) interpretation in~\cite{niemeyer}. Subsequent works refined this paradigm to handle time-varying delays, wave reflections, and performance limitations~\cite{lozano2002,tanner2005,chopra2006,aziminejad,hirche,nuno2011,polushin}. A well-known trade-off exists between stability and transparency in delayed teleoperation systems~\cite{lawrence1993}. In particular, when direct environmental contact forces are transmitted through a lossless passive WT channel, the remote robot may become non-passive, potentially destabilizing the closed-loop system. To avoid this issue, most WT-based architectures transmit a coordinating force rather than the true environmental contact force, which degrades transparency and telepresence quality.

In order to enable stable environmental contact force feedback to enhance transparency, a stabilizing communication law based on an \emph{upper strictly passive} (USP) communication channel is proposed for bilateral teleoperation systems with constant delays~\cite{tho2023tac}. 
The key idea is to deliberately introduce an \emph{excess level of passivity} in the communication channel to compensate for the passivity shortage of the remote subsystem when direct contact force feedback is used. By using Lyapunov and Lyapunov--Krasovskii techniques, delay-independent stability of the closed-loop system is established, and simulations illustrate improved transparency compared to classical lossless WT schemes. 
More recently, a force-receptive scattering transformation has been proposed in~\cite{tho2025neural}. 
By introducing additional communication ports and redundant scattering weights, the scheme allows kinematic and force information to be transmitted independently. 
These redundant parameters can be optimized numerically to improve teleoperation transparency while preserving passivity of the communication channel. 
However, both works rely on linear,
time-invariant robot models with a single degree-of-freedom (DOF). 
In many practical teleoperation applications, the robot dynamics are inherently nonlinear and multi-DOF, motivating the development of a framework that can handle nonlinear robot dynamics while retaining desirable communication and transparency properties.

This paper has two objectives. First, we extend the
stabilizing communication law framework of~\cite{tho2023tac} to bilateral
teleoperation systems with nonlinear multi-DOF Euler--Lagrange robots. Second, inspired by~\cite{tho2025neural}, we investigate
optimization-based tuning of the communication channel to
improve teleoperation transparency while preserving delay-robust stability.
Our main contributions are as follows:
\begin{itemize}[leftmargin=*,nosep]
\item A passivity-shortage characterization of the nonlinear remote subsystem via an LMI condition is developed;
\item A vector-valued multi-DOF extension of the USP communication law for nonlinear teleoperation guaranteeing stability as well as position and force synchronization under appropriate conditions;
\item An optimization-based tuning procedure is developed to improve teleoperation transparency.
\end{itemize}
Note again that the proposed approach is delay-independent, namely larger constant delays mainly degrade transparency and transient performance without affecting stability.

\emph{Notations:} Let $\RR^{m \times n}$ represent the set of real $m \times n$ matrices. 
Denote $\S_{>0}^n$ as the set of $n$-dimensional real positive definite symmetric matrices, and $I_n$ the $n$-dimensional identity matrix. Let $|\cdot|$ be the Euclidean vector norm. Let 
$\L_\infty$ (resp., $\L_2$) denote the space of uniformly bounded (resp., square integrable) functions~\cite[Page~196]{khalil}. In matrices, $\star$ denotes the entries determined by symmetry.

\section{PROBLEM DESCRIPTION}\label{sec:problem}

We consider a bilateral teleoperation architecture with constant communication delay $T > 0$, where a nonlinear local \emph{master} robot interacts with a human operator and a nonlinear remote \emph{slave} robot interacts with an environment (see Fig.~\ref{fig:tele}). 
\begin{figure}[!t]
    \centering
\includegraphics[width=0.49\textwidth]{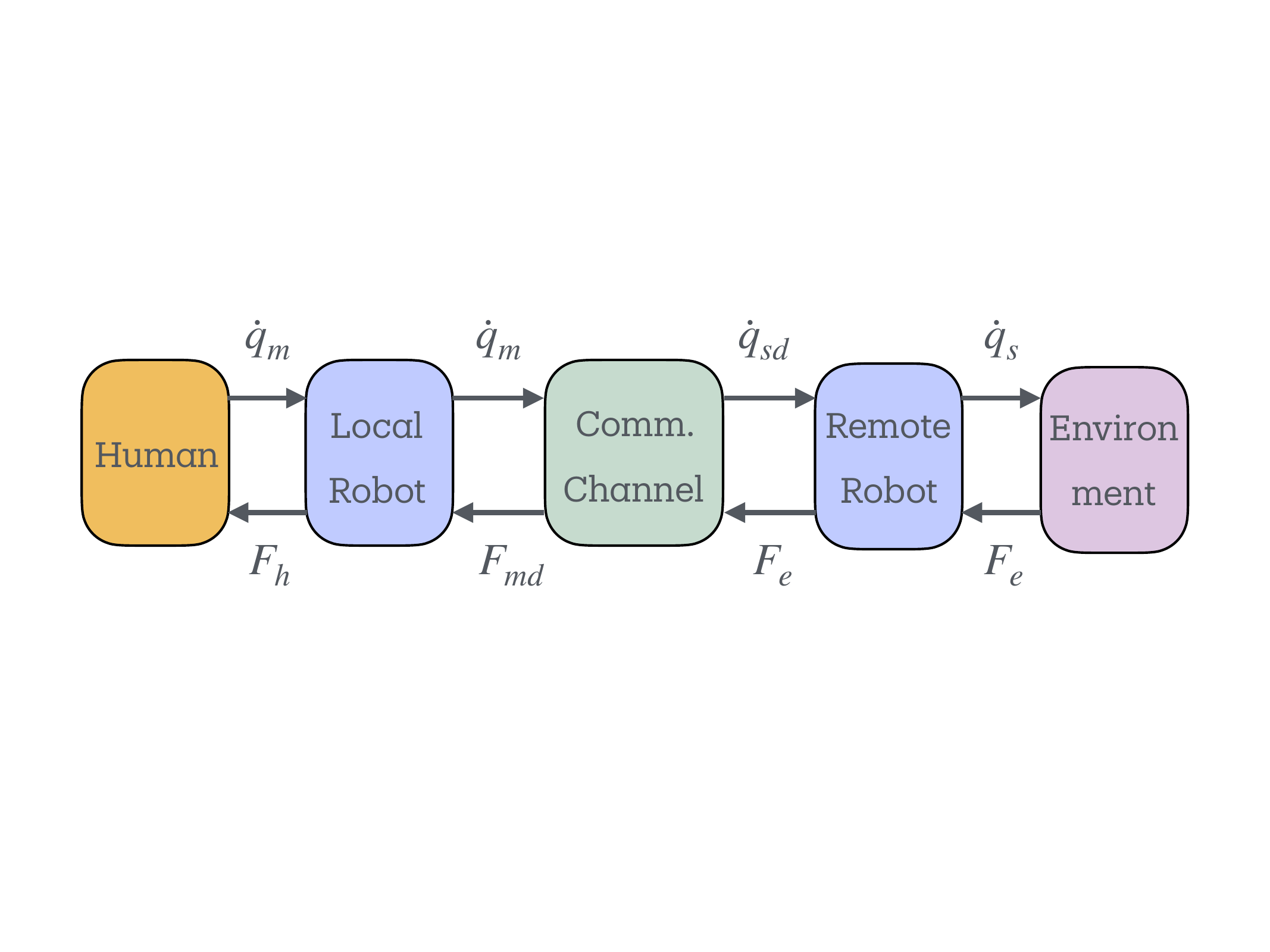}
    \caption{Velocity–force nonlinear teleoperation architecture with delayed communication channel.}
    \label{fig:tele}
\end{figure}
The two robots are modeled as Euler--Lagrange systems with generalized positions $q_m,q_s\in\RR^n$~\cite{nuno2008,chopra2008}
\begin{subequations}\label{eq:EL}
\begin{align}
M_m(q_m)\ddot{q}_m + C_m(q_m,\dot{q}_m)\dot{q}_m + G_m(q_m) & = \tau_m + F_h,\label{eq:master}\\
M_s(q_s)\ddot{q}_s + C_s(q_s,\dot{q}_s)\dot{q}_s + G_s(q_s) & = \tau_s - F_e,\label{eq:slave}
\end{align}
\end{subequations}
where $M_m(q_m),M_s(q_s)\in\S_{>0}^n$ are inertia matrices, $C_m(q_m,\dot q_m),C_s(q_s,\dot q_s) \in\RR^{n\times n}$ are Coriolis$\slash$centrifugal terms, $G_m(q_m),\tau_m,F_h\in\RR^n$ and $G_s(q_s),\tau_s,F_e\in\RR^n$ denote gravity, control, and interaction forces, respectively.
The matrices $M_i(q_i)$ and $C_i(q_i,\dot q_i)$ ($i \in \{m,s\}$) satisfy the standard properties of Euler--Lagrange systems~\cite{Murray}
\begin{itemize}[leftmargin=*,nosep]
\item $M_i(q_i)$ is symmetric positive definite for all $q_i$;
\item $\dot M_i(q_i)-2C_i(q_i,\dot q_i)$ is skew-symmetric.
\end{itemize}
In~\eqref{eq:master}, the human force $F_h$ exerted on the local robot is modeled as
\begin{equation}\label{eq:Fh}
F_h = -K_h q_m + f_h^\star,
\end{equation}
where $K_h \in \S^n_{>0}$ is the spring coefficient of the human hand. The exogenous force $f_h^\star$ generated by human muscles is
\begin{equation}\label{eq:fh}
f_h^\star = K_h q_{md} + F_h^\star,
\end{equation}
where $q_{md},F_h^\star \in \RR^n$ represent respectively the constant and time-varying part. We assume that $F_h^\star \in \L_\infty \cap \L_{2}$. Likewise, the environment force takes the form
\begin{equation}
F_e = K_e(q_s,\dot{q}_s) q_s,
\end{equation}
where $K_e(q_s,\dot{q}_s)\in\S_{>0}^n$ is the (state-dependent) stiffness matrix of the environment. In order to improve teleoperation transparency, the contact force $F_e$ is transmitted to the local side instead of a coordinating force. As seen in the linear, scalar setting~\cite{tho2023tac}, such a direct contact force feedback may render the remote subsystem (including the remote robot $+$ controller $+$ environment) non-passive and can in turn destabilize the closed-loop teleoperation under delays. In order to circumvent this major problem, a multi-input multi-output USP communication law will be established within the communication channel (see Fig.~\ref{fig:tele}) to dissipate excessive energy generated by the remote subsystem. 

The control inputs for the local and remote robots, employing both gravity compensation and damping injection, are given by\begin{subequations}
\begin{align}
\tau_m & = -B_m(q_m,\dot{q}_m)\dot{q}_m - F_{md} + G_m(q_m), \label{eq:controller_master}\\
\tau_s & = -B_{s1}(q_s,\dot{q}_s)\dot{q}_s + F_s + G_s(q_s), \label{eq:controller_slave}
\end{align}
\end{subequations}
where $B_m(q_m,\dot{q}_m)\in\S^n_{>0}$, $B_{s1}(q_s,\dot{q}_s)\in\RR^{n\times n}$ are (possibly) state-dependent damping matrices. Moreover, the coordinating force is
\begin{equation}
F_s = K_s(q_s,\dot{q}_s)\int_{0}^{t}(\dot{q}_{sd}-\dot{q}_s)d\tau + B_{s2}(q_s,\dot{q}_s)(\dot{q}_{sd}-\dot{q}_s),
\end{equation}
with $K_s(q_s,\dot{q}_s),B_{s2}(q_s,\dot{q}_s)\in\RR^{n\times n}$ the integral and proportional gains, respectively, where $(q_{sd},\dot{q}_{sd})$ is the remote command and $F_{md}$ is the reflected force at the master side, all produced by the communication law. Substituting~\eqref{eq:controller_slave} into~\eqref{eq:slave} under the assumption that $q_{sd}(0) = q_s(0)$, we get the closed-loop dynamics of the remote subsystem
\begin{subequations}\label{eq:slave_cl}
\begin{multline}
M_s(q_s)\ddot{q}_s + C(q_s,\dot{q}_s)\dot{q}_s + K(q_s,\dot{q}_s)q_s
=\\
K_s(q_s,\dot{q}_s)q_{sd} + B_{s2}(q_s,\dot{q}_s)\dot{q}_{sd},
\end{multline}
in which
\begin{align}
C(q_s,\dot{q}_s) & = C_s(q_s,\dot{q}_s)+B_{s1}(q_s,\dot{q}_s)+B_{s2}(q_s,\dot{q}_s),\\
K(q_s,\dot{q}_s) & = K_s(q_s,\dot{q}_s)+K_e(q_s,\dot{q}_s).
\end{align}
\end{subequations}
Similarly, by applying~\eqref{eq:controller_master} on~\eqref{eq:master}, we obtain the dynamics of the local subsystem
\begin{multline}\label{eq:localsys}
M_m(q_m)\ddot{q}_m + C_m(q_m,\dot{q}_m)\dot{q}_m + B_m(q_m,\dot{q}_m)\dot{q}_m \\= -F_{md} + F_h.
\end{multline}

In this paper, by using direct environmental force feedback, our objective is to design a communication channel so that the
closed-loop teleoperation system remains stable in the presence of
communication delays while achieving high-fidelity haptic realism. In other words, the main goals are to ensure that:
\begin{itemize}[leftmargin=*,nosep]
\item $q_s$ tracks $q_m$, i.e., the remote robot follows the local one during free motion (i.e., when not resisted by the environment);
\item $F_h$ tracks $F_e$, i.e., the human operator can feel the force exerted by the remote environment on the remote robot during hard contact. 
\end{itemize}
To facilitate the analysis, we assume \emph{a priori} boundedness of the slave motion over possibly large sets.
\begin{assumption}\label{ass:compact}
There exist compact sets $\P, \V \subset \RR^n$ such that $q_s(t) \in \P$ and $\dot{q}_s(t) \in \V$ for all $t \geq 0$. 
\end{assumption}

\begin{remark}
Assumption~\ref{ass:compact} requires the slave position and velocity to evolve within compact sets that can be large, while no convergence is assumed. In practice, robotic joints
operate within physical limits and the injected damping together with
the communication law prevents unbounded velocity growth. Consequently,
the slave trajectories remain confined to a bounded region of the state
space during normal operation. This assumption ensures that the matrices
$M_s(q_s)$, $C(q_s,\dot q_s)$, etc., remain bounded, which later
enables the LMI-based passivity characterization to be
verified over the set $\P\times\V$. Moreover, this assumption is imposed only on the slave (remote) robot, not on the master (local) robot. Since boundedness of all signals is established later, the analysis is not circular. Moreover, the boundedness of $q_s$ and $\dot q_s$ established later in Theorem~\ref{theo:stability} ensures that the trajectories remain inside some compact subset of $\RR^n\times\RR^n$, so the sets $\P$ and $\V$ can always be chosen to contain the reachable trajectories. Note that this first assumed boundedness does not guarantee stability (in the sense of arbitrarily bounding solutions by bounding their initial conditions) or asymptotic convergence, and we will later prove in Theorem~\ref{theo:stability} the convergence to $0$ of important signals, which is not assumed. In this spirit, the later results can be stated in a semi-global way, which will be investigated in our future work.
\end{remark}

Next, we characterize the passivity shortage of the remote subsystem, which will be used to design the USP stabilizing communication law. The following assumptions are made, which are standard in robotic systems~\cite[Pages~126--127]{lewis2004robot}.
\begin{assumption}\label{ass_robot}
The matrix $M_i$ ($i \in \{m,s\}$) is uniformly invertible as $M_i^{-1}$. The maps $M_i$, $M_i^{-1}$, $C_i$, $B_m$, $B_{s1}$, $B_{s2}$,
$K_s$, and $K_e$ are $C^1$. There exists $\underline b_m>0$ such that
\begin{equation}\label{eq:Bm}
B_m(p,v)\succeq \underline b_m I_n,
\qquad
\forall (p,v) \in \RR^n \times \RR^n.
\end{equation}
\end{assumption}

\section{PASSIVITY-SHORTAGE ANALYSIS OF THE NONLINEAR REMOTE SUBSYSTEM}
In order to quantify the amount of passivity shortage of the remote subsystem~\eqref{eq:slave_cl} when direct environmental contact force feedback is used, the input--output map $F_e$--$\dot{q}_{sd}$ will be analyzed. We first define the following augmented states
\begin{subequations}
\begin{align}
x &:=(\dot{q}_s,q_s,q_{sd})
\in\RR^{3n},\\
z &:=(\dot{q}_s,q_s,\dot{q}_{sd},q_{sd})
\in\RR^{4n}.
\end{align}
\end{subequations}
Let $P\in\S_{>0}^{3n}$ and consider the quadratic storage function
\begin{equation}
V_s(x) := x^\top P x.
\end{equation}

\begin{theorem}[Passivity shortage of slave]\label{theo:passivity_shortage}
Suppose that Assumptions~\ref{ass:compact}  and~\ref{ass_robot} hold. Then, along~\eqref{eq:slave_cl}, it holds that
\begin{equation}\label{eq:passive}
\dot{V}_s(x) \leq F_e^\top \dot{q}_{sd} + \alpha\dot{q}_{sd}^\top \dot{q}_{sd} - \lambda\dot{q}_s^\top \dot{q}_s,
\quad \forall t\geq 0,
\end{equation}
if there exist $\alpha>0$, $P\in\S_{>0}^{3n}$, and $\lambda>0$
such that for all $(p,v)\in\P \times \V$, the following LMI is satisfied
\begin{subequations}
\begin{equation}\label{eq:robust_LMI_clean}
A_2(p,v)^\top P A_1
+
A_1^\top P A_2(p,v)
+
A_3(p,v)
\preceq 0,
\end{equation}
where the matrices are defined as
\begin{align}
A_1 &=
\begin{pmatrix}
I_n & 0 & 0 & 0\\
0 & I_n & 0 & 0\\
0 & 0 & 0 & I_n
\end{pmatrix} \in \RR^{3n \times 4n},\\
A_2(p,v) &=
\begin{pmatrix}
A_{2a}(p,v) &
A_{2b}(p,v) &
A_{2c}(p,v) &
A_{2d}(p,v)\\
I_n & 0 & 0 & 0\\
0 & 0 & I_n & 0
\end{pmatrix}\notag\\
& \in \RR^{3n \times 4n},\\
A_3(p,v) &=
{\setlength{\arraycolsep}{4.4pt}%
\begin{pmatrix}
\lambda I_n & 0 & 0 & 0\\
\star & 0 & -\frac12 K_e(p,v) & 0\\
\star & \star & -\alpha I_n & 0\\
\star & \star & \star & 0
\end{pmatrix}}\in \RR^{4n \times 4n},
\end{align}
and the sub-matrices of $A_{2}(p,v)$ are given by
\begin{align}
A_{2a}(p,v) &= -(M_s(p))^{-1}C(p,v)\in \RR^{n \times n},\\
A_{2b}(p,v) & = -(M_s(p))^{-1}K(p,v)\in \RR^{n \times n},\\
A_{2c}(p,v) & =(M_s(p))^{-1}B_{s2}(p,v)\in \RR^{n \times n},\\
A_{2d}(p,v) & = (M_s(p))^{-1}K_s(p,v)\in \RR^{n \times n}.
\end{align}
\end{subequations}
\end{theorem}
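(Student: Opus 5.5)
The plan is to write both sides of \eqref{eq:passive} as quadratic forms in the augmented vector $z=(\dot q_s,q_s,\dot q_{sd},q_{sd})$. The LMI \eqref{eq:robust_LMI_clean}, evaluated at the current state, then gives the inequality pointwise in time.

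First, I will express $\dot x$ linearly in $z$. Under Assumption~\ref{ass_robot}, $M_s(q_s)$ is invertible, so the closed-loop remote dynamics \eqref{eq:slave_cl} can be solved for $\ddot q_s$:
\begin{equation*}
\ddot q_s = -M_s^{-1}C\,\dot q_s - M_s^{-1}K\,q_s + M_s^{-1}B_{s2}\,\dot q_{sd} + M_s^{-1}K_s\,q_{sd},
\end{equation*}
with all matrices evaluated at $(q_s,\dot q_s)$. Stacking this with the trivial identities $\tfrac{d}{dt}q_s=\dot q_s$ and $\tfrac{d}{dt}q_{sd}=\dot q_{sd}$ gives exactly $\dot x = A_2(q_s,\dot q_s)\,z$, since the block rows of $A_2$ are $(A_{2a},A_{2b},A_{2c},A_{2d})$, $(I_n,0,0,0)$ and $(0,0,I_n,0)$. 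In the same way, $A_1$ only selects components, so $x = A_1 z$. The state-dependent coefficients enter only through $(p,v)=(q_s(t),\dot q_s(t))$; this is the one place where the system is nonlinear, and it is handled by freezing $(p,v)$ at each time $t$.

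Next, I differentiate the storage function:
\begin{equation*}
\dot V_s = \dot x^\top P x + x^\top P\dot x = z^\top\big(A_2^\top P A_1 + A_1^\top P A_2\big) z .
\end{equation*}
The right-hand side of \eqref{eq:passive} is also a quadratic form in $z$. Substituting $F_e=K_e q_s$ and using the symmetry of $K_e$,
\begin{equation*}
F_e^\top\dot q_{sd} = \tfrac12 q_s^\top K_e \dot q_{sd} + \tfrac12 \dot q_{sd}^\top K_e q_s .
\end{equation*}
This is precisely the off-diagonal $(2,3)$ block contribution of $A_3$ with its sign reversed. The $\alpha$ and $\lambda$ terms correspond to the $(3,3)$ and $(1,1)$ diagonal blocks. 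Hence
\begin{equation*}
F_e^\top\dot q_{sd}+\alpha|\dot q_{sd}|^2-\lambda|\dot q_s|^2 = -z^\top A_3 z ,
\end{equation*}
and so $\mathrm{RHS}-\dot V_s = -z^\top\big(A_2^\top PA_1+A_1^\top PA_2+A_3\big)z$. I will double-check the sign conventions block by block when writing this out.

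Finally, Assumption~\ref{ass:compact} guarantees $(q_s(t),\dot q_s(t))\in\P\times\V$ for every $t\ge0$. The LMI therefore holds at $(p,v)=(q_s(t),\dot q_s(t))$, the quadratic form above is nonnegative, and \eqref{eq:passive} follows for all $t\ge 0$.

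The main obstacle is not analytic but bookkeeping. One subtlety is that $V_s$ does not contain $\dot q_{sd}$, which instead appears as an exogenous input inside $z$; this is exactly why $A_1$ is rectangular. The other is making sure the signs and symmetrization in $A_3$ match \eqref{eq:passive}. The $C^1$ hypotheses in Assumption~\ref{ass_robot} are needed only so that solutions exist and $V_s$ is differentiable along them.
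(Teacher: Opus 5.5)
Your proposal is correct and is essentially the paper's own proof. You write $x=A_1z$ and $\dot x=A_2(q_s,\dot q_s)z$, identify $F_e^\top\dot q_{sd}+\alpha|\dot q_{sd}|^2-\lambda|\dot q_s|^2=-z^\top A_3z$ using the symmetry of $K_e$ (your sign bookkeeping is right), and evaluate the LMI at $(p,v)=(q_s(t),\dot q_s(t))\in\P\times\V$ via Assumption~\ref{ass:compact}.
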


\begin{proof}
Differentiating $V_s(x)=x^\top P x$ along trajectories gives
$\dot V_s(x)=\dot x^\top P x+x^\top P \dot x$. We also have $\dot{x} = (\ddot q_s,\dot q_s,\dot q_{sd})$. From~\eqref{eq:slave_cl} and Assumption~\ref{ass_robot}, we have
\begin{multline*}
\ddot q_s
=
-(M_s(q_s))^{-1}C(q_s,\dot q_s)\dot q_s
-(M_s(q_s))^{-1}K(q_s,\dot q_s)q_s
\\+(M_s(q_s))^{-1}B_{s2}(q_s,\dot q_s)\dot q_{sd}
+(M_s(q_s))^{-1}K_s(q_s,\dot q_s)q_{sd}.
\end{multline*}
Therefore, we get $\dot x =
A_2(q_s,\dot q_s)z$. Using this result and also $x=A_1z$, we deduce that
\begin{align*}
\dot V_s(x)
&=z^\top A_2(q_s,\dot q_s)^\top P A_1 z+
z^\top A_1^\top P A_2(q_s,\dot q_s) z\\
&=z^\top(A_2(q_s,\dot q_s)^\top P A_1+A_1^\top P A_2(q_s,\dot q_s))z.
\end{align*}
Since $K_e(q_s,\dot{q}_s)$ is symmetric, we obtain
\begin{align*}
F_e^\top \dot q_{sd}&=
q_s^\top K_e(q_s,\dot{q}_s)\dot q_{sd}
\\&=
\frac12
z^\top
\begin{pmatrix}
0 & 0 & 0 & 0\\
\star & 0 & K_e(q_s,\dot{q}_s) & 0\\
\star & \star & 0 & 0\\
\star & \star & \star & 0
\end{pmatrix}
z.
\end{align*}
Moreover, the following properties hold:
\begin{align*}
-\lambda \dot q_s^\top \dot q_s
&=
z^\top
\begin{pmatrix}
-\lambda I_n & 0 & 0 & 0\\
\star & 0 & 0 & 0\\
\star & \star & 0 & 0\\
\star & \star & \star & 0
\end{pmatrix}
z,\\
\alpha \dot q_{sd}^\top \dot q_{sd}&
=
z^\top
\begin{pmatrix}
0 & 0 & 0 & 0\\
\star & 0 & 0 & 0\\
\star & \star & \alpha I_n & 0\\
\star & \star & \star & 0
\end{pmatrix}
z.
\end{align*}
Collecting the above expressions yields
\begin{multline*}
\dot V_s(x)-(
F_e^\top \dot q_{sd}+
\alpha \dot q_{sd}^\top \dot q_{sd}-\lambda \dot q_s^\top \dot q_s)
\\=z^\top(
A_2(q_s,\dot q_s)^\top P A_1+A_1^\top P A_2(q_s,\dot q_s)+A_3(q_s,\dot q_s))z.
\end{multline*}
By Assumption~\ref{ass:compact}, condition~\eqref{eq:robust_LMI_clean} implies
$$
A_2(q_s,\dot q_s)^\top P A_1
+
A_1^\top P A_2(q_s,\dot q_s)
+
A_3(q_s,\dot q_s)
\preceq 0
$$
along trajectories. Hence,~\eqref{eq:passive} follows.
\end{proof}

\begin{remark}
According to Theorem~\ref{theo:passivity_shortage}, the remote subsystem is non-passive with a \emph{passivity shortage level} of $\alpha > 0$ and an \emph{excessive velocity dissipation} $-\lambda < 0$ when there exists a constant storage matrix $P$ solving~\eqref{eq:robust_LMI_clean}. To relax conservativeness especially in high-DOF robots, the analysis
could allow a state-dependent storage matrix
$P(p,v)$. In that case, the storage function for the remote subsystem becomes $V_s(x) = x^\top P(q_s,\dot q_s) x$, and a parameter-dependent LMI condition can be deduced, which is more tractable than~\eqref{eq:robust_LMI_clean} thanks to reduced conservativeness. In practice, the LMI~\eqref{eq:robust_LMI_clean} can be solved following either: i) polytopic method~\cite{apkarian95} (provided that we choose vertices so that~\eqref{eq:robust_LMI_clean} can be written as a convex combination of its vertex values, in exchange for higher conservativeness); or ii) grid-based method~\cite{wu95} (which is less conservative but requires a high enough number of grid points to be verified~\cite{sferlazza}).
\end{remark}

\begin{figure*}[ht]
    \centering    \includegraphics[width=0.6\linewidth]{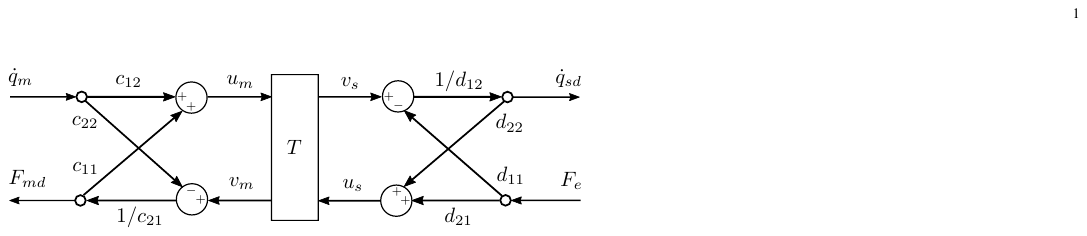}
    \caption{Communication channel governed by a USP wave transformation with vectorized inputs and outputs ($T$ is the constant time delay).}
    \label{fig:channel}
\end{figure*}

\section{WAVE-BASED STABILIZING COMMUNICATION LAW}

\subsection{Structure of Communication Law}
The dissipation inequality~\eqref{eq:passive} derived in the previous section shows that the remote subsystem exhibits a passivity shortage with respect to the port $\dot q_{sd}$ due to direct environmental contact force feedback. To guarantee stability of the delayed teleoperation loop, the communication channel must therefore supply an excess level of passivity that compensates for this shortage~\cite{tho2023tac}. In this section, we construct such a communication law by extending the scalar USP wave-transformation framework to vector-valued communication ports for nonlinear multi-DOF robotic manipulators.

The vector-valued USP communication structure is illustrated in Fig.~\ref{fig:channel}. Let $b > 0$ be the characteristic impedance. Denote $\gamma_l \leq 0$ and $\gamma_{r} \leq 0$ as the excessive level of passivity at the port $\dot{q}_{m}$ and $\dot{q}_{sd}$, respectively. The scattering weights $c_{ij}$ and $d_{ij}$ ($i,j \in \{1,2\}$) are given by~\cite{tho2023tac}
\begin{subequations}\label{eq:cd}
\begin{align}
 c_{11}&=b,&
c_{12}&=b\gamma_l+\frac{1}{4b},\label{eq:c1}\\
c_{21}&=b,&
c_{22}&=b\gamma_l-\frac{1}{4b}, \label{eq:c2}\\
d_{11}&=b,&
d_{12}&=-b\gamma_r+\frac{1}{4b},\label{eq:d1}\\
d_{21}&=b,&
d_{22}&=-b\gamma_r-\frac{1}{4b}.\label{eq:d2}
\end{align}
\end{subequations}
From Fig.~\ref{fig:channel}, the vector-valued wave variables $u_m$, $v_m$, $u_s$, $v_s$ $\in \RR^n$ are described as follows
\begin{subequations}
\begin{align}
u_m(t)& = c_{11}F_{md}(t)+c_{12}\dot{q}_m(t),\\
v_m(t) &= c_{21}F_{md}(t)+c_{22}\dot{q}_m(t),\\
u_s(t) &= d_{21}F_e(t)+d_{22}\dot q_{sd}(t),\\
v_s(t) &= d_{11}F_e(t)+d_{12}\dot q_{sd}(t),
\end{align}
\end{subequations}
together with the delay relations
\begin{subequations}
\begin{align}
v_s(t) &= u_m(t-T), \\
v_m(t) &= u_s(t-T).
\end{align}
\end{subequations}
By combining the wave definitions with the delay relations, we can express the channel
dynamics directly in terms of the physical variables
$(\dot q_m,F_e,\dot q_{sd},F_{md})$. Eliminating the wave variables
yields the following communication law
\begin{subequations}\label{eq:usp_law}
\begin{align}
\dot q_{sd}(t) &= \frac{4b^2\gamma_l+1}{-4b^2\gamma_r+1} \dot q_m(t-T) +\zeta_1(t),\\
F_{md}(t)& =F_e(t-T)+ \zeta_2(t),
\end{align}
where the residual terms $\zeta_{1}(t)$ and $\zeta_{2}(t)$ are
\begin{align}
\zeta_1(t)
&=
\frac{4b^2}{-4b^2\gamma_r+1}
(F_{md}(t-T)-F_e(t)),\\
\zeta_2(t)
&=
\frac{1}{b}
\bigg(
-\left(b\gamma_l-\frac{1}{4b}\right)\dot q_m(t)\nonumber\\&\qquad{} + \left(-b\gamma_r-\frac{1}{4b}\right)\dot q_{sd}(t-T)
\bigg).
\end{align}
\end{subequations}
From~\eqref{eq:usp_law}, the following results can be obtained
\begin{subequations}
\begin{equation}\label{eq:recursion}
\dot q_{sd}(t)
=
\delta \dot q_{sd}(t-2T)+\Delta(t),
\end{equation}
where
\begin{align}
\delta
&=
\frac{-4b^2\gamma_r-1}{-4b^2\gamma_r+1},\label{eq:delta}\\
\Delta(t)
&=
\frac{2}{-4b^2\gamma_r+1}
\dot q_m(t-T)
\notag\\&\qquad{}+
\frac{4b^2}{-4b^2\gamma_r+1}
(F_e(t-2T)-F_e(t)).\label{eq:Delta}
\end{align}
\end{subequations}
Thus, the delayed recursion governing $\dot q_{sd}$ has exactly the same form as in the scalar law~\cite{tho2023tac}.

\subsection{Stability and Transparency of Communication Law}

The stability and transmission properties of the USP communication law
follow from the delayed linear recursion satisfied by $\dot q_{sd}$. The coefficient $\delta$ depends only on $\gamma_r$ and satisfies
\begin{equation}
|\delta|<1
\qquad
\text{whenever } \gamma_r<0.
\end{equation}
Hence, the recursion~\eqref{eq:recursion} is exponentially stable. We make the following assumption regarding the initial histories.
\begin{assumption}\label{ass_history}
The delayed signals are initialized by bounded continuous histories
on $[-T,0]$. In particular, $u_m$, $u_s$, $v_m$, $v_s$, $\dot q_m$, $\dot q_{sd}$, $F_e$, and $F_{md}$
are supplied with bounded initial histories wherever
required by the delay relations.
\end{assumption}

The next theorem is the nonlinear, multi-DOF version of~\cite[Theorem~4]{tho2023tac}, which ensures boundedness of signals and especially the asymptotic convergence of velocities.

\begin{theorem}[Closed-loop stability]\label{theo:stability}
Suppose that Theorem~\ref{theo:passivity_shortage}
is satisfied with $\alpha>0$ and $\lambda>0$, and that Assumptions~\ref{ass_robot} and~\ref{ass_history} hold. If the communication parameter satisfies
\begin{equation}\label{eq:gamma}
\gamma_l \leq 0, \qquad \gamma_r < -\alpha,
\end{equation}
then all signals remain bounded. Moreover, as $t\to+\infty$,
\begin{equation}\label{eq:qmslim}
\dot q_m(t)\to 0, 
\qquad 
\dot q_s(t)\to 0.
\end{equation}
\end{theorem}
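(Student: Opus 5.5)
The plan is to build a Lyapunov--Krasovskii functional for the whole loop from three pieces: the master's kinetic energy plus the human spring energy, the slave storage $V_s$ from Theorem~\ref{theo:passivity_shortage}, and the energy stored in the delayed wave channel. Then show that its derivative is nonpositive up to an $\L_2$ input term. Concretely, take
\begin{multline*}
W(t)=\tfrac12\dot q_m^\top M_m(q_m)\dot q_m+\tfrac12(q_m-q_{md})^\top K_h(q_m-q_{md})\\
+V_s(x)+\tfrac{1}{2}\int_{t-T}^{t}\big(|u_m(\tau)|^2+|u_s(\tau)|^2\big)\,d\tau .
\end{multline*}
Here \eqref{eq:Fh}--\eqref{eq:fh} give $F_h=-K_h(q_m-q_{md})+F_h^\star$. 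On the master side, \eqref{eq:localsys} and the skew-symmetry of $\dot M_m-2C_m$ give
\[
\tfrac{d}{dt}\Big(\tfrac12\dot q_m^\top M_m\dot q_m+\tfrac12(q_m-q_{md})^\top K_h(q_m-q_{md})\Big)=-\dot q_m^\top B_m\dot q_m-\dot q_m^\top F_{md}+\dot q_m^\top F_h^\star .
\]
On the slave side we use \eqref{eq:passive} directly. The derivative of the integral term equals $\tfrac12(|u_m(t)|^2-|v_s(t)|^2+|u_s(t)|^2-|v_m(t)|^2)$, by the delay relations.

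The core algebraic step is to compute, from the wave definitions with the weights \eqref{eq:cd}, the two port balances
\[
\tfrac{1}{2b}\big(|u_m|^2-|v_m|^2\big)\quad\text{and}\quad \tfrac{1}{2b}\big(|u_s|^2-|v_s|^2\big).
\]
As in the scalar case of \cite{tho2023tac}, the cross terms should reduce to $F_{md}^\top\dot q_m$ and $-F_e^\top\dot q_{sd}$. The quadratic terms should be $\gamma_l|\dot q_m|^2$ on the master port and $\gamma_r|\dot q_{sd}|^2$ on the slave port, up to sign conventions. The normalization of the channel term in $W$ will be chosen so that the channel inequality reads: power into the channel $\le \gamma_l|\dot q_m|^2+\gamma_r|\dot q_{sd}|^2$. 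Since everything acts componentwise with scalar weights, the vector case is a direct sum of the scalar identity, and no new estimate is needed.

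Adding the three derivatives, the power terms $\dot q_m^\top F_{md}$ and $F_e^\top\dot q_{sd}$ cancel, and we obtain
\[
\dot W\le -(\underline b_m+|\gamma_l|)|\dot q_m|^2+(\alpha+\gamma_r)|\dot q_{sd}|^2-\lambda|\dot q_s|^2+\dot q_m^\top F_h^\star .
\]
Because $\gamma_r<-\alpha$ and $\gamma_l\le0$, Young's inequality absorbs the cross term as $\dot q_m^\top F_h^\star\le\tfrac{\underline b_m}{2}|\dot q_m|^2+\tfrac{1}{2\underline b_m}|F_h^\star|^2$. Integrating, and using $F_h^\star\in\L_2$, shows that $W$ is bounded. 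It also shows that $\dot q_m,\dot q_s,\dot q_{sd}\in\L_2$, and that $\dot q_m$, $q_m$, $x$ and the wave integrals are bounded. Boundedness of $u_m,u_s$ pointwise then follows from the channel relations and Assumption~\ref{ass_history}. Boundedness of $F_e,F_{md}$ follows from Assumption~\ref{ass:compact} and continuity of $K_e$. The recursion \eqref{eq:recursion} with $|\delta|<1$ and bounded $\Delta$ also bounds $\dot q_{sd}$.

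For convergence I will apply Barbalat's lemma, in the form: $\L_2$ plus uniformly continuous implies convergence to zero. I need $\ddot q_m,\ddot q_s\in\L_\infty$. These come from \eqref{eq:localsys} and \eqref{eq:slave_cl}, using the uniform invertibility of $M_i$, bounded states on $\P\times\V$, continuity of all gains (Assumption~\ref{ass_robot}), and the boundedness of $F_{md},F_h,q_{sd},\dot q_{sd}$ already shown. Hence $\dot q_m$ and $\dot q_s$ are uniformly continuous, and \eqref{eq:qmslim} follows.

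The main obstacle I expect is the sign and normalization bookkeeping in the wave-energy identity. The weights \eqref{eq:cd} are designed so that the transformation is not lossless but upper strictly passive, and the exact coefficient in front of $|\dot q_{sd}|^2$ must come out as $\gamma_r$ so that \eqref{eq:gamma} gives a strictly negative margin. I will verify this by expanding $|u_s|^2-|v_s|^2$ with $d_{21}=d_{11}=b$. This gives $2b(d_{22}-d_{12})F_e^\top\dot q_{sd}+(d_{22}^2-d_{12}^2)|\dot q_{sd}|^2=-F_e^\top\dot q_{sd}+\gamma_r|\dot q_{sd}|^2$, and the master port is expanded analogously.

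A secondary subtlety is that $\dot q_{sd}$ may have discontinuities propagated from the initial histories. In that case I will apply Barbalat only to $\dot q_m$ and $\dot q_s$, which are $C^1$ solutions of the ODEs.
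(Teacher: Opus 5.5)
Your proposal is correct and takes essentially the same route as the paper. Both use the same Lyapunov--Krasovskii functional (master kinetic plus human-spring energy, $V_s$, and the wave-channel integral), the same port identities $|u_m|^2-|v_m|^2=F_{md}^\top\dot q_m+\gamma_l|\dot q_m|^2$ and $|u_s|^2-|v_s|^2=-F_e^\top\dot q_{sd}+\gamma_r|\dot q_{sd}|^2$, and the same Young, $\L_2$, recursion and Barbalat chain.

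Two bookkeeping fixes are needed.
\begin{enumerate}
\item The channel integral must carry coefficient $1$, not $\tfrac12$, which gives exactly the paper's $V_c$ in \eqref{eq:Vc}. Your own expansion shows no normalization is needed, and with $\tfrac12$ the terms $\dot q_m^\top F_{md}$ and $F_e^\top\dot q_{sd}$ would only half cancel.
\item $F_{md}$ cannot be bounded from Assumption~\ref{ass:compact} and continuity of $K_e$ alone, since it depends on $\dot q_{sd}$. The order must be: bound $\dot q_{sd}$ via \eqref{eq:recursion}, then $F_{md}$ via \eqref{eq:usp_law}, and only then $u_m,u_s$ pointwise.
\end{enumerate}
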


\begin{proof}
Consider the Lyapunov--Krasovskii functional
\begin{multline}\label{eq:V}
V
=
\frac12 \dot q_m^\top M_m(q_m)\dot q_m
\\+
\frac12 (q_{md} - q_m)^\top K_h (q_{md} - q_m)
\\+
V_s(x)
+
V_c,
\end{multline}
where $V_s(x)$ is the storage function from Theorem~\ref{theo:passivity_shortage} with
$x=(\dot q_s,q_s,q_{sd})$, and
\begin{equation}\label{eq:Vc}
V_c=\int_{t-T}^{t}(
|u_m(s)|^2+|u_s(s)|^2)ds
\end{equation}
is the communication-channel storage defined for the generalized wave transformation. Since $M_m(q_m)\in\S_{>0}^n$ for all $q_m \in \RR^n$, $K_h\in\S_{>0}^n$, $P\in\S_{>0}^{3n}$, and $V_c\geq 0$, we have
\begin{equation}\label{eq:V_pos}
V(t)\geq 0,
\qquad \forall t\geq 0.
\end{equation}
We differentiate $V$ in~\eqref{eq:V} along the closed-loop trajectories. First, we have
$$\frac{d}{dt}\left(\frac12 \dot q_m^\top M_m(q_m)\dot q_m\right)
=
\dot q_m^\top M_m(q_m)\ddot q_m
+\frac12 \dot q_m^\top \dot M_m(q_m)\dot q_m.$$ Second, since $K_h$ and $q_{md}$ are constant, we have $$\frac{d}{dt}\left(
\frac12 (q_{md}-q_m)^\top K_h (q_{md}-q_m)
\right)=
-(q_{md}-q_m)^\top K_h \dot q_m.$$
Hence, we get
\begin{multline}\label{eq:Vdot1}
\dot V
=
\dot q_m^\top M_m(q_m)\ddot q_m
+
\frac12 \dot q_m^\top \dot M_m(q_m)\dot q_m
\\-
(q_{md}-q_m)^\top K_h \dot q_m
+
\dot V_s(x)
+
\dot V_c.
\end{multline}
Next, by the property of Euler--Lagrange systems, $\dot M_m(q_m)-2C_m(q_m,\dot q_m)$ is skew-symmetric. Hence, $$\dot q_m^\top\left(\frac12 \dot M_m(q_m)-C_m(q_m,\dot q_m)\right)\dot q_m=0,$$
which implies that
\begin{multline*}
\dot q_m^\top M_m(q_m)\ddot q_m
+
\frac12 \dot q_m^\top \dot M_m(q_m)\dot q_m
\\=
\dot q_m^\top (M_m(q_m)\ddot q_m + C_m(q_m,\dot q_m)\dot q_m).
\end{multline*}
Substituting this into~\eqref{eq:Vdot1}, we obtain
\begin{multline*}
\dot V
=
\dot q_m^\top(M_m(q_m)\ddot q_m + C_m(q_m,\dot q_m)\dot q_m)
\\-
(q_{md}-q_m)^\top K_h \dot q_m
+
\dot V_s(x)
+
\dot V_c.
\end{multline*}
Using~\eqref{eq:localsys} and the symmetry of $K_h$, we get
\begin{multline}\label{eq:Vdot}
\dot V
=
-\dot q_m^\top B_m(q_m,\dot q_m)\dot q_m
-
\dot q_m^\top F_{md}
\\+
\dot q_m^\top F_h^\star
+
\dot V_s(x)
+
\dot V_c.
\end{multline}
Differentiating $V_c$ in~\eqref{eq:Vc} gives
\begin{align}
\dot V_c
&=
u_m^\top(t)u_m(t)
+
u_s^\top(t)u_s(t)
\notag\\&\qquad{}-
u_m^\top(t-T)u_m(t-T)
-
u_s^\top(t-T)u_s(t-T)\notag\\
& =
u_m^\top u_m-v_m^\top v_m
+
u_s^\top u_s-v_s^\top v_s.\label{eq:Vcdot}
\end{align}
For the master-side port, the wave-variable definitions yield
\begin{multline*}
u_m^\top u_m-v_m^\top v_m
=(c_{11}^2-c_{21}^2)F_{md}^\top F_{md}
\\
+
2(c_{11}c_{12}-c_{21}c_{22})
F_{md}^\top\dot q_m+
(c_{12}^2-c_{22}^2)
\dot q_m^\top\dot q_m.
\end{multline*}
Using~\eqref{eq:c1}-\eqref{eq:c2}, we get $u_m^\top u_m-v_m^\top v_m
=
F_{md}^\top\dot q_m
+
\gamma_l\dot q_m^\top\dot q_m$. Similarly, for the slave-side port,
\begin{multline*}
u_s^\top u_s-v_s^\top v_s
=
(d_{21}^2-d_{11}^2)F_e^\top F_e
\\+
2(d_{21}d_{22}-d_{11}d_{12})
F_e^\top\dot q_{sd}+
(d_{22}^2-d_{12}^2)
\dot q_{sd}^\top\dot q_{sd}.
\end{multline*}
Using~\eqref{eq:d1}-\eqref{eq:d2}, we get $u_s^\top u_s-v_s^\top v_s
=
-F_e^\top\dot q_{sd}
+
\gamma_r\dot q_{sd}^\top\dot q_{sd}$.
Plugging these into~\eqref{eq:Vcdot}, we obtain
\begin{equation}
\dot V_c
=
F_{md}^\top \dot q_m
-
F_e^\top \dot q_{sd}
+
\gamma_l \dot q_m^\top \dot q_m
+
\gamma_r \dot q_{sd}^\top \dot q_{sd}.
\label{eq:channel_dissipation}
\end{equation}
Substituting~\eqref{eq:passive} and
\eqref{eq:channel_dissipation} into~\eqref{eq:Vdot} gives
\begin{align*}
\dot V
{}&
\leq-\dot q_m^\top B_m(q_m,\dot q_m)\dot q_m
-
\cancel{F_{md}^\top \dot q_m}
+
\dot q_m^\top F_h^\star
\\
&\qquad{}
+
(\cancel{F_e^\top \dot q_{sd}}
+
\alpha \dot q_{sd}^\top \dot q_{sd}
-
\lambda \dot q_s^\top \dot q_s)\\
&\qquad{}
+
(\cancel{F_{md}^\top \dot q_m}
-
\cancel{F_e^\top \dot q_{sd}}
+
\gamma_l \dot q_m^\top \dot q_m
+
\gamma_r \dot q_{sd}^\top \dot q_{sd})\\
&=-\dot q_m^\top(B_m(q_m,\dot q_m)-\gamma_l I_n)\dot q_m
\\&\qquad{}
+
(\gamma_r+\alpha)\dot q_{sd}^\top \dot q_{sd}
-
\lambda \dot q_s^\top \dot q_s
+
\dot q_m^\top F_h^\star.
\end{align*}
Since $\gamma_l \leq 0$, define $\beta_m:=\underline b_m-\gamma_l>0$ with $\underline b_m$ from Assumption~\ref{ass_robot}. From~\eqref{eq:Bm}, we get
$$
B_m(p,v)-\gamma_l I_n\succeq \beta_m I_n, \qquad \forall (p,v) \in \RR^n\times\RR^n,
$$
hence
$$
\dot q_m^\top(B_m(q_m,\dot q_m)-\gamma_l I_n)\dot q_m
\geq
\beta_m |\dot q_m|^2.
$$
Therefore, we have
\begin{equation}\label{eq:Vdot2}
\dot V
\leq
-\beta_m |\dot q_m|^2
+
(\gamma_r+\alpha)|\dot q_{sd}|^2
-
\lambda |\dot q_s|^2
+
\dot q_m^\top F_h^\star.
\end{equation}
Using Young's inequality, we have
$$
\dot q_m^\top F_h^\star
\leq
\frac{\beta_m}{2}|\dot q_m|^2
+
\frac{1}{2\beta_m}|F_h^\star|^2.
$$
Substituting this into~\eqref{eq:Vdot2} yields
$$
\dot V
\leq
-\frac{\beta_m}{2}|\dot q_m|^2
+
(\gamma_r+\alpha)|\dot q_{sd}|^2
-
\lambda |\dot q_s|^2
+
\frac{1}{2\beta_m}|F_h^\star|^2.
$$
Let $-(\gamma_r+\alpha)=\eta>0$. We get
\begin{equation}\label{eq:Vdot3}
\dot V
\leq
-\frac{\beta_m}{2}|\dot q_m|^2
-
\eta |\dot q_{sd}|^2
-
\lambda |\dot q_s|^2
+
\frac{1}{2\beta_m}|F_h^\star|^2.
\end{equation}
We now integrate both sides of~\eqref{eq:Vdot3} from $0$ to $t$ and get
\begin{multline}
V(t)-V(0)
\leq
-\frac{\beta_m}{2}\int_0^t |\dot q_m(s)|^2 ds
-
\eta \int_0^t |\dot q_{sd}(s)|^2 ds
\\-
\lambda \int_0^t |\dot q_s(s)|^2 ds
+
\frac{1}{2\beta_m}\int_0^t |F_h^\star(s)|^2 ds.
\end{multline}
Thanks to~\eqref{eq:V_pos}, rearranging terms gives
\begin{multline}\label{eq:V_int}
\frac{\beta_m}{2}\int_0^t |\dot q_m(s)|^2 ds
+
\eta \int_0^t |\dot q_{sd}(s)|^2 ds
+
\lambda \int_0^t |\dot q_s(s)|^2 ds
\\
\leq
V(0)
+
\frac{1}{2\beta_m}\int_0^t |F_h^\star(s)|^2 ds.
\end{multline}
Because $F_h^\star \in \L_2$, the right-hand side of~\eqref{eq:V_int} is bounded uniformly in $t$. Letting $t\to+\infty$, we conclude that $\dot q_m,\dot q_{sd},\dot q_s\in \L_2$.
Next, from the fact that $V(t)\geq 0$, the integrated inequality also shows that $V \in \L_\infty$.

We next establish boundedness of the state variables from the boundedness of $V$. Since $K_h\in\S^n_{>0}$ is constant, boundedness of $V$ implies boundedness of $q_{md}-q_m$, and since $q_{md}$ is constant, this implies boundedness of $q_m$. Because $M_m(q_m)\in\S^n_{>0}$ is continuous and $q_m$ is bounded, boundedness of $V$ implies boundedness of $\dot q_m$. Similarly since $P\in\S_{>0}^{3n}$, boundedness of $V_s(x)$ implies boundedness of $x=(\dot q_s,q_s,q_{sd})$. Thus,
$q_m,\dot q_m,q_s,\dot q_s,q_{sd}$ are bounded. 

Since $F_h=K_h(q_{md}-q_m)+F_h^\star$, and since $q_m$ is bounded and $F_h^\star\in \L_\infty$ by assumption, it follows that $F_h\in \L_\infty$. Because $(q_s,\dot q_s)\in\P\times\V$, $K_e$ is bounded on
$\P\times\V$, and $q_s$ is bounded, it follows from
$F_e=K_e(q_s,\dot q_s)q_s$ that $F_e\in\L_\infty$. Using~\eqref{eq:recursion}, we analyze the behavior of $\dot q_{sd}$. Since $\dot q_m\in \L_\infty$ and $F_e\in \L_\infty$, it follows from~\eqref{eq:Delta} that $\Delta\in \L_\infty$.
Hence, from the recursion~\eqref{eq:recursion} with $|\delta| < 1$, we obtain
$\dot q_{sd}\in \L_\infty$.
Using~\eqref{eq:usp_law}, we then get $F_{md}\in \L_\infty$. We thus get $\tau_m \in \L_\infty$ and $\tau_s \in \L_\infty$, and in turn $\ddot{q}_m \in \L_\infty$ and $\ddot{q}_s \in \L_\infty$. By differentiating~\eqref{eq:recursion} and noting that $\ddot{q}_{m}, \ddot{q}_{s}, \dot{q}_{s} \in \L_\infty$, we can infer that $\ddot{q}_{sd} \in \L_\infty$. Because $\ddot q_m,\ddot q_s \in \L_\infty$ and
$\dot q_m,\dot q_s \in \L_2$,
Barbalat-type arguments (see~\cite{farkas}) imply~\eqref{eq:qmslim}.
\end{proof}

\begin{remark}
The use of $K_h$ in the functional~\eqref{eq:V} prevents us from considering a state-dependent stiffness, i.e., $K_h(q_m,\dot{q}_m)$, in~\eqref{eq:Fh}. Indeed, if this were the case, the derivative $\dot{K}_h$ would show up in $\dot{V}$ as some extra terms that are not easy to handle. But note that our results provide sufficient but not necessary conditions.
\end{remark}

Now that stability is ensured, the next theorem, which is the nonlinear, multi-DOF version of~\cite[Theorem~5]{tho2023tac}, guarantees teleoperation transparency under appropriate conditions. Denote the position and force tracking errors\begin{subequations}
\begin{align}
e_q(t) &:= q_s(t)-q_m(t-T),\\
e_f(t) &:= F_h(t)-F_e(t-T).
\end{align}
\end{subequations}
\begin{theorem}[Teleoperation transparency]\label{theo:transmission}
Suppose that Theorem~\ref{theo:passivity_shortage}
is satisfied and the channel is designed as described in Theorem~\ref{theo:stability}. Then, the following statements hold.

\begin{enumerate}[leftmargin=*,nosep]
\item \emph{(Free motion):}
Assume that $K_s(p,0)$ is non-singular for all $p\in\P$. If $\gamma_r=-\frac{1}{4b^2}$ and $F_e\equiv 0$, then
\begin{equation}\label{eq:limeq}
\lim_{t\to+\infty} e_q(t)=q_s(0)-q_m(-T).
\end{equation}
Thus, if initial positions satisfy
$q_s(0)=q_m(-T)$,\footnote{As per Assumption~\ref{ass_history}, initial histories are defined on $[-T,0]$ (negative time). While the condition $q_s(0)=q_m(-T)$ is rather restrictive, it is only needed for exact asymptotic synchronization. Otherwise, a constant bias remains as in~\eqref{eq:limeq}.} then
\begin{equation}
e_q(t)\to 0\text{ as }t\to+\infty;
\end{equation}
\item \emph{(Hard contact):} Assume that $K_e$ is $C^1$ on $\P \times \V$. If $\ddot q_m(t)\to 0$ as $t\to+\infty$,
then
\begin{equation}
e_f(t)\to 0 \text{ as }t\to+\infty .
\end{equation}
\end{enumerate}
\end{theorem}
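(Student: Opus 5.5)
The plan is to prove the two items separately, using the recursion~\eqref{eq:recursion} together with the convergence and boundedness established in Theorem~\ref{theo:stability}. Both items rest on a common preliminary fact: $\ddot q_s(t)\to 0$. To get it, I would differentiate the closed-loop slave dynamics~\eqref{eq:slave_cl}, i.e., $\ddot q_s=A_2(q_s,\dot q_s)z$. All maps are $C^1$ by Assumption~\ref{ass_robot}, and $(q_s,\dot q_s)\in\P\times\V$. Theorem~\ref{theo:stability} already gives $\dot q_s,\ddot q_s,q_{sd},\dot q_{sd},\ddot q_{sd}\in\L_\infty$. Hence $\dddot q_s\in\L_\infty$, so $\ddot q_s$ is uniformly continuous. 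Since $\dot q_s\to 0$, Barbalat's lemma then gives $\ddot q_s\to 0$. I expect this to be the main technical obstacle: the argument depends on bounding every term of $\frac{d}{dt}A_2(q_s,\dot q_s)z$, and in particular on the bound $\ddot q_{sd}\in\L_\infty$ that is only sketched in the proof of Theorem~\ref{theo:stability}.

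\emph{Free motion.} With $\gamma_r=-\frac{1}{4b^2}$, \eqref{eq:delta} gives $\delta=0$, and the denominator $-4b^2\gamma_r+1$ equals $2$. With $F_e\equiv 0$, \eqref{eq:recursion}--\eqref{eq:Delta} then reduce to $\dot q_{sd}(t)=\dot q_m(t-T)$. Integrating from $0$ and using $q_{sd}(0)=q_s(0)$ yields
\[
q_{sd}(t)=q_m(t-T)+q_s(0)-q_m(-T).
\]
Next, I rewrite~\eqref{eq:slave_cl} with $K_e\equiv 0$ (so $K=K_s$) as
\[
K_s(q_s,\dot q_s)\,(q_s-q_{sd})=-M_s(q_s)\ddot q_s-C(q_s,\dot q_s)\dot q_s+B_{s2}(q_s,\dot q_s)\dot q_{sd}.
\]
The right-hand side tends to zero: $\ddot q_s\to0$ by the preliminary fact, $\dot q_s\to0$ by Theorem~\ref{theo:stability}, $\dot q_{sd}(t)=\dot q_m(t-T)\to0$, and the coefficient matrices are bounded on $\P\times\V$. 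The map $p\mapsto K_s(p,0)$ is continuous and non-singular on the compact set $\P$, so $\sup_{p\in\P}\|K_s(p,0)^{-1}\|<\infty$. Because $K_s$ is continuous and $\dot q_s\to0$, this gives a uniform bound on $K_s(q_s,\dot q_s)^{-1}$ for large $t$. Therefore $q_s-q_{sd}\to0$. Finally, since
\[
e_q=(q_s-q_{sd})+q_s(0)-q_m(-T),
\]
\eqref{eq:limeq} follows, and $e_q\to0$ when $q_s(0)=q_m(-T)$.

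\emph{Hard contact.} From~\eqref{eq:localsys},
\[
F_h-F_{md}=M_m(q_m)\ddot q_m+(C_m+B_m)(q_m,\dot q_m)\dot q_m.
\]
This tends to zero because $\ddot q_m\to0$ by hypothesis, $\dot q_m\to0$ by Theorem~\ref{theo:stability}, and $q_m,\dot q_m$ are bounded, so the continuous matrices stay bounded. By~\eqref{eq:usp_law}, $F_{md}(t)=F_e(t-T)+\zeta_2(t)$, so it suffices to show $\zeta_2\to0$, i.e., $\dot q_m\to0$ (already known) and $\dot q_{sd}\to0$. Since $K_e$ is $C^1$ on $\P\times\V$, the chain rule gives
\[
\dot F_e=\big(\partial_p K_e\,\dot q_s+\partial_v K_e\,\ddot q_s\big)q_s+K_e\dot q_s,
\]
which tends to zero because $\dot q_s,\ddot q_s\to0$ and $q_s$ is bounded. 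Hence $F_e(t-2T)-F_e(t)=-\int_{t-2T}^t\dot F_e\,ds\to0$, and together with $\dot q_m(t-T)\to0$ this gives $\Delta\to0$ in~\eqref{eq:Delta}. Because $|\delta|<1$, iterating~\eqref{eq:recursion} over blocks of length $2T$ shows that $\dot q_{sd}\to0$. Consequently $\zeta_2\to0$, so $F_h-F_e(t-T)=(F_h-F_{md})+\zeta_2\to0$, i.e., $e_f\to0$.
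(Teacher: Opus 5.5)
Your proposal is correct and follows essentially the same route as the paper: Barbalat's lemma gives $\ddot q_s\to0$, the reduction $\dot q_{sd}(t)=\dot q_m(t-T)$ combined with non-singularity of $K_s(\cdot,0)$ handles free motion, and the chain $F_h-F_{md}\to0$, $F_e(t)-F_e(t-2T)\to0$, $\Delta\to0$, $\dot q_{sd}\to0$ handles hard contact. Your uniform bound on $K_s(q_s,\dot q_s)^{-1}$ for large $t$ is slightly more careful than the paper's step $K_s(q_s(+\infty),0)(q_s(+\infty)-q_{sd}(+\infty))=0$, which silently assumes these limits exist.
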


\begin{proof}
First, we prove the free-motion result.
Substituting $\gamma_r=-\frac{1}{4b^2}$ and $F_e\equiv0$
into~\eqref{eq:recursion} yields $\dot q_{sd}(t)=\dot q_m(t-T)$. Integrating and rearranging this give
\begin{equation}\label{eq:pos_relation}
q_{sd}(t)-q_m(t-T)=q_{sd}(0)-q_m(-T).
\end{equation}
Next, we differentiate~\eqref{eq:slave_cl}.
Since $\ddot q_{s},\dot q_{s},\ddot q_{sd}$ are bounded 
(from Theorem~\ref{theo:stability}),
the jerk $\dddot q_s$ is bounded.
Hence, $\ddot q_s$ is uniformly continuous.
Because $\dot q_s(t)\to0$ as $t\to+\infty$
(from Theorem~\ref{theo:stability}),
Barbalat's lemma~\cite[Page~323]{khalil} implies $\ddot q_s(t)\to 0$. Taking $t\to+\infty$ in the free-motion slave dynamics yields
$$
K_s(q_s(+\infty),0)(q_s(+\infty)-q_{sd}(+\infty))=0.
$$
Because $K_s(p,0)$ is non-singular for all $p\in\P$, we obtain
\begin{equation}\label{eq:qsd_qs_zero}
q_{sd}(t)-q_s(t)\to 0.
\end{equation}
Combining~\eqref{eq:pos_relation} and~\eqref{eq:qsd_qs_zero}
gives
$$
\lim_{t\to+\infty}(q_s(t)-q_m(t-T))
=
q_s(0)-q_m(-T).
$$

We now prove the hard-contact result.
Taking the limit $t\to+\infty$ in~\eqref{eq:localsys} and using
$\dot q_m(t)\to0$ (from Theorem~\ref{theo:stability})
together with the assumption $\ddot q_m(t)\to0$
give
\begin{equation}\label{eq:Fh_Fmd}
\lim_{t\to+\infty}(F_h(t)-F_{md}(t))=0 .
\end{equation}
Next, we analyze the difference $F_e(t)-F_e(t-2T)$.
Since $F_e=K_e q_s$ and $\dot q_s\in\L_\infty$
with $\dot q_s(t)\to0$, we obtain using the Lebesgue dominated convergence theorem~\cite{sherbert1992}
\begin{align*}
&\lim_{t\to+\infty}(F_e(t)-F_e(t-2T))
\\&=\lim_{t\to+\infty}\int_{t-2T}^{t}\dot F_e(s)ds
\\
&=\int_{-2T}^{0}
\lim_{t\to+\infty}\bigg(
K_e(q_s(t+s),\dot q_s(t+s))\dot q_s(t+s)
\\&\qquad{}+
\dot K_e(q_s(t+s),\dot q_s(t+s))q_s(t+s)
\bigg)ds,
\end{align*}
where $\dot K_e(q_s,\dot q_s)$ is given by 
$$
\dot K_e(q_s,\dot q_s)
=
\sum_{i=1}^n \frac{\partial K_e}{\partial q_{s,i}}(q_s,\dot q_s)\dot q_{s,i}+
\sum_{i=1}^n \frac{\partial K_e}{\partial \dot q_{s,i}}(q_s,\dot q_s)\ddot q_{s,i},
$$
where $\dot{q}_{s,i}$ (resp., $\ddot{q}_{s,i}$) is the $i$-th component of $\dot{q}_s$ (resp., $\ddot{q}_s$). Because $K_e$ is $C^1$ on $\P \times \V$ compact, $K_e$ and its partial derivatives are bounded on $\P \times \V$, and since $q_s$ is bounded, while $\dot q_s(t)\to 0$ and $\ddot q_s(t)\to 0$, we obtain
\begin{equation}
\label{eq:Fe_FetmT_0}
\lim_{t\to+\infty}(F_e(t)-F_e(t-2T))
= 0.
\end{equation}
Applying~\eqref{eq:Fe_FetmT_0} together with $\dot q_m(t)\to 0$
to~\eqref{eq:Delta}, we get that $\Delta(t) \to 0$. Hence,~\eqref{eq:recursion} becomes a stable recursive equation with a vanishing disturbance, implying $\dot{q}_{sd}\to0$. Therefore, from~\eqref{eq:usp_law}, we have that
$$
\lim_{t\to+\infty}(F_{md}(t)-F_e(t-T))=0.
$$
Finally, combining this with~\eqref{eq:Fh_Fmd} yields the second claim.
\end{proof}

Theorem~\ref{theo:transmission} therefore shows that the proposed communication law achieves asymptotic position transmission in free motion and force transmission during hard contact. Note that condition $\gamma_r=-\frac{1}{4b^2}$ is not imposed in the hard-contact case.

\subsection{Optimization-Based Tuning of Communication Law}

The USP communication law depends on the
scattering parameters $b$ and $\gamma_l$ (note that for impedance matching we impose $\gamma_r = -\frac{1}{4b^2} < 0$). While the
condition $\gamma_r<0$, which determines the excess passivity injected by the communication channel, guarantees stability of the delayed recursion~\eqref{eq:recursion}, the particular choice of $b$ and
$\gamma_l$ can strongly affect transparency and
performance. We therefore view the communication law as a parameterized mapping
with tunable parameters
\begin{equation}
\theta := (b,\gamma_l).
\end{equation}
For each choice of $\theta$, the communication law~\eqref{eq:usp_law} generates the signals $\dot q_{sd}$ and $F_{md}$ that determine the interaction forces experienced by the human
operator.

To improve transparency, the parameter $\theta$ can be optimized offline so as to improve both position and force
tracking between the master and slave sides. We therefore consider
the combined weighted objective
\begin{equation}\label{eq:cost}
\J(\theta)
=
\int_0^{T_w}
(w_q |e_q(t)|^2 + w_f |e_f(t)|^2) dt,
\end{equation}
over a time window $T_w > 0$ larger than the transient phase, where $w_q \geq 0$ and $w_f \geq 0$ are normalized weighting coefficients that determine
the relative importance of position and force
transparency. The optimization problem is then, with $\alpha$ from Theorem~\ref{theo:passivity_shortage},
\begin{subequations}
\begin{align}
\min_{\theta} \quad
& \J(\theta) \\
\text{s.t.} \quad
& 0 < b < \frac{1}{2\sqrt{\alpha}}, \quad \gamma_l \leq 0.\label{eq:cons2}
\end{align}
\end{subequations}
Then, with the obtained $b$, we take $\gamma_r = -\frac{1}{4b^2}$. Note that the constraints~\eqref{eq:cons2} already impose that $\gamma_r < -\alpha$ to guarantee stability according to Theorem~\ref{theo:stability}. The
parameter $\gamma_l$ mainly affects the transmission gain from the
master velocity to the slave command and can therefore be tuned to
improve transparency. The optimization can be carried out offline using simulation-based evaluation
of $\J(\theta)$, yielding communication parameters that preserve
stability while improving transmission. While the optimization uses finite-horizon simulations, stability and transparency are guaranteed beyond the time $T_w$ as long as the parameters found satisfy Theorems~\ref{theo:stability} and~\ref{theo:transmission}.

\section{NUMERICAL ILLUSTRATION}
To illustrate the proposed framework, numerical simulations are conducted on a nonlinear two-DOF planar manipulator for both the master and slave robots.

\subsection{Robot Dynamics and Simulation Scenario}

Both the master and slave robots are modeled as the same nonlinear Euler--Lagrange system after gravity compensation\begin{subequations}
\begin{equation}\label{eq:robot_eg}
M(q)\ddot{q} + C(q,\dot{q})\dot{q} + B \dot{q} = \tau + F,
\end{equation}
where $q = (q_1,q_2)\in \RR^2$ denotes joint positions and $F$ denotes the external interaction force (human force for the master and environment force for the slave). The inertia matrix in~\eqref{eq:robot_eg} is given by~\cite[Page~165]{Murray}
\begin{equation}
M(q) =
\begin{pmatrix}
\frac{17}{4} + 2\cos(q_2) & 2 + \cos(q_2) \\
2 + \cos(q_2) & 2
\end{pmatrix},
\end{equation}
in kg, and the Coriolis matrix is
\begin{equation}
C(q,\dot{q}) =
\begin{pmatrix}
-\sin(q_2)\dot{q}_2 & -\sin(q_2)(\dot{q}_1 + \dot{q}_2) \\
\sin(q_2)\dot{q}_1 & 0
\end{pmatrix}.
\end{equation}
\end{subequations}
We take $K_h = 20I_2$~(N$\slash$m) and $F_h^\star = (0,0)$~(N). The viscous damping matrix is $B_m = B_{s1} = 0.5 I_2$~(Ns$\slash$m). The coordinating force gains are $K_s = 100 I_2$ and $B_{s2} = 20 I_2$. This choice of $B_{s2}$ corresponds to the parameter selection used in the theoretical development of the USP communication law. Here, the robots are nonlinear and have two DOFs while satisfying standard
Euler--Lagrange properties.

To simulate both the free-motion and hard-contact cases, we assume that there is a wall with a stiffness of $K_e = 100 I_2$ (which is moderately hard) placed at the position $q_s = 0$ in the negative side (component-wise). As a result, contact is made whenever $q_s$ reaches $0$ from a positive value, and the actual environment force can be described by
\begin{equation}
F_e(q_s) = K_e \min(q_s,0),
\end{equation}
where the minimum is taken component-wise. This corresponds to a unilateral contact model in which the slave robot interacts with the environment only when $q_s$ penetrates the surface, while no force is applied during free motion. To alternate between these two cases, we let $q_{md}$ be a periodic step function switching between $q_{md} = \pm 0.1$~(m) with a period of $60$~(s), spending half of the period in each mode, which is piece-wise constant. Note that this switching $F_e$ is only for the sake of illustrative simulations and we still treat $F_e = K_e q_s$ in the channel design. We introduce a constant delay $T = 0.2$~(s) in both transmission directions.

\subsection{Passivity Analysis and Communication Law Design}

To evaluate the passivity-shortage condition, with the matrices depending on $q_2$ but not $q_1$ and under assumptions about the boundedness of $q_s,\dot{q}_s$, we solve~\eqref{eq:robust_LMI_clean} on the set $\P \times \V = [-\pi,\pi]^2 \times [-1,1]^2$ following a grid-based approach~\cite{wu95} with step size $0.1$, using YALMIP~\cite{Lofberg2004}. We impose $\lambda = 0.001$ and obtain $\alpha = 5.7709$ together with a storage matrix $P$. Although this approach only verifies~\eqref{eq:robust_LMI_clean} at the grid points and not necessarily over the entire set $\P \times \V$, it provides a simple approximate certificate for dissipation.

To emphasize force transmission, the optimization minimizes the cost~\eqref{eq:cost} with $w_q = 0$ and $w_f = 1$, which is evaluated through $120$-second long simulations of the closed-loop teleoperation system at each iteration. 
The optimization problem is solved using MATLAB's \texttt{patternsearch} 
solver, where the variables are constrained in $0.03 \leq b \leq 0.09$ and 
$-25 \leq \gamma_l \leq -15$. The optimized parameters that we can find are $b = 0.06$ and $\gamma_l = -20$, leading to $\gamma_r = -69.4444$. Notice that since the cost $\J(\theta)$ is evaluated through time-domain simulations, it is generally non-convex and non-smooth; therefore, gradient-based optimization methods such as \texttt{fmincon} may exhibit poor performance or convergence issues. Instead, the derivative-free \texttt{patternsearch} iteratively explores a set of directions on a mesh around the current iterate and accepts steps that decrease the objective function, without requiring gradient information.

\subsection{Simulation Results}
Fig.~\ref{fig:position} shows the resulting closed-loop position trajectories, which remain stable, and the position transmission errors, which can only converge to constant offsets in the hard-contact case. Indeed, the environment's stiffness prevents the slave's position from penetrating as far as the master's. The magnitude of the slave velocities $\dot{q}_s$ is smaller than~$0.06$~(m$\slash$s), well below the value assumed when solving LMIs. Notice that $q_s$ in the simulations changes signs, so we are indeed alternating between free-motion and hard-contact scenarios, with contact made at around $33$~(s) and each $60$~(s) after that.

\begin{figure}[!t]
    \centering
\includegraphics[width=0.49\linewidth]{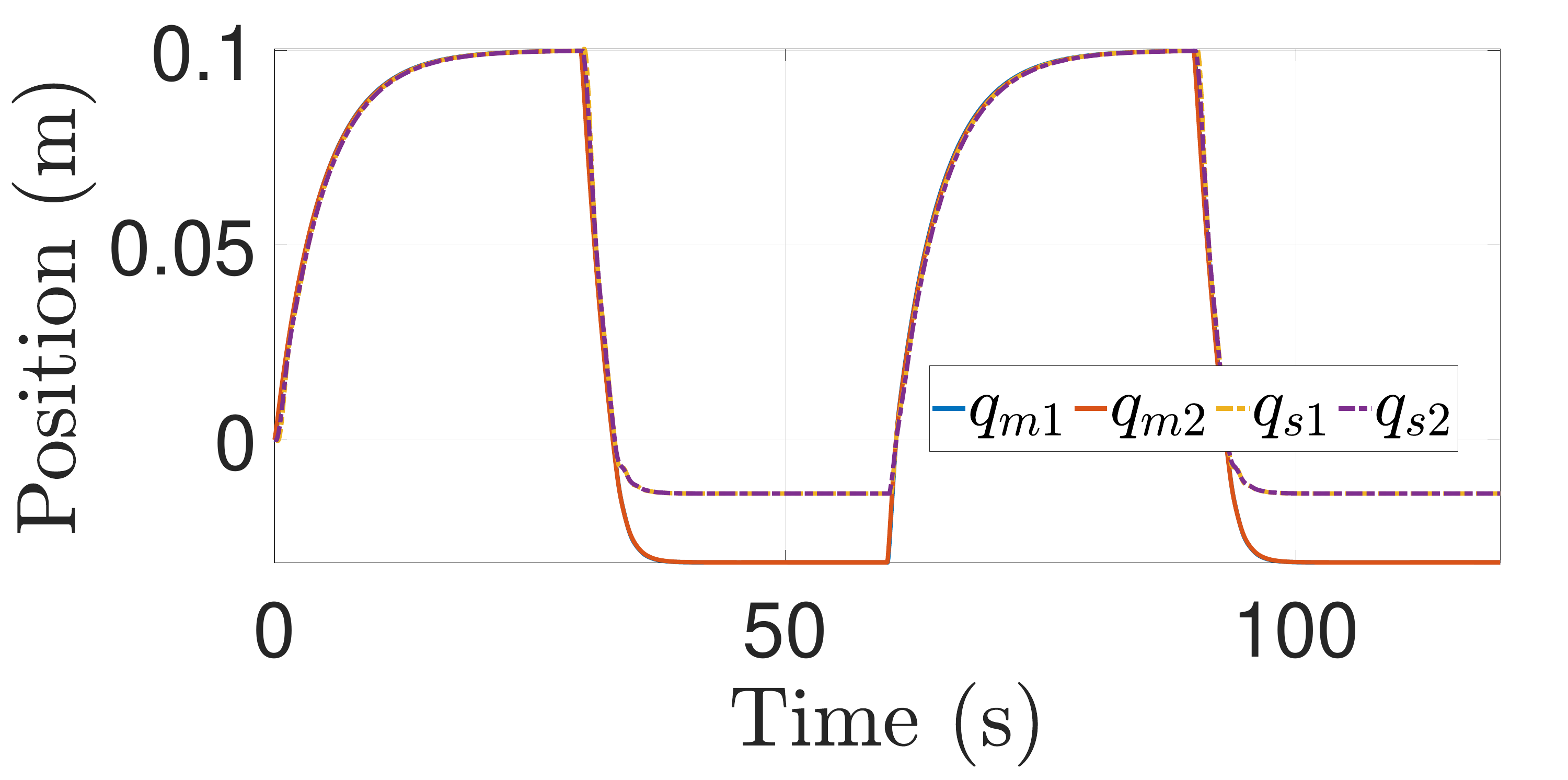}
\includegraphics[width=0.49\linewidth]{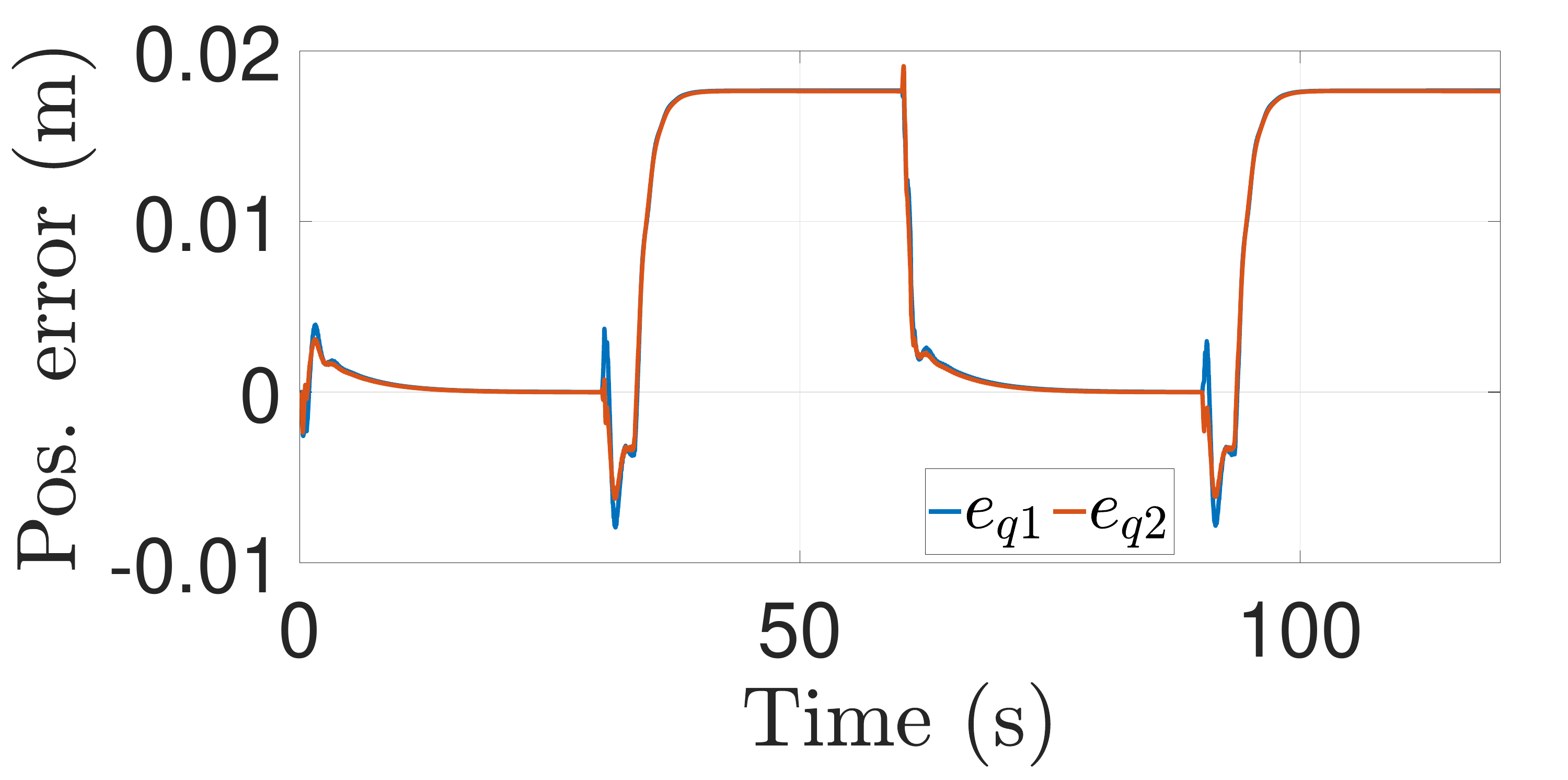}
    \caption{Stable position trajectories and tracking errors between the robots.}
    \label{fig:position}
\end{figure}

\begin{figure}[!t]
    \centering
\includegraphics[width=0.49\linewidth]{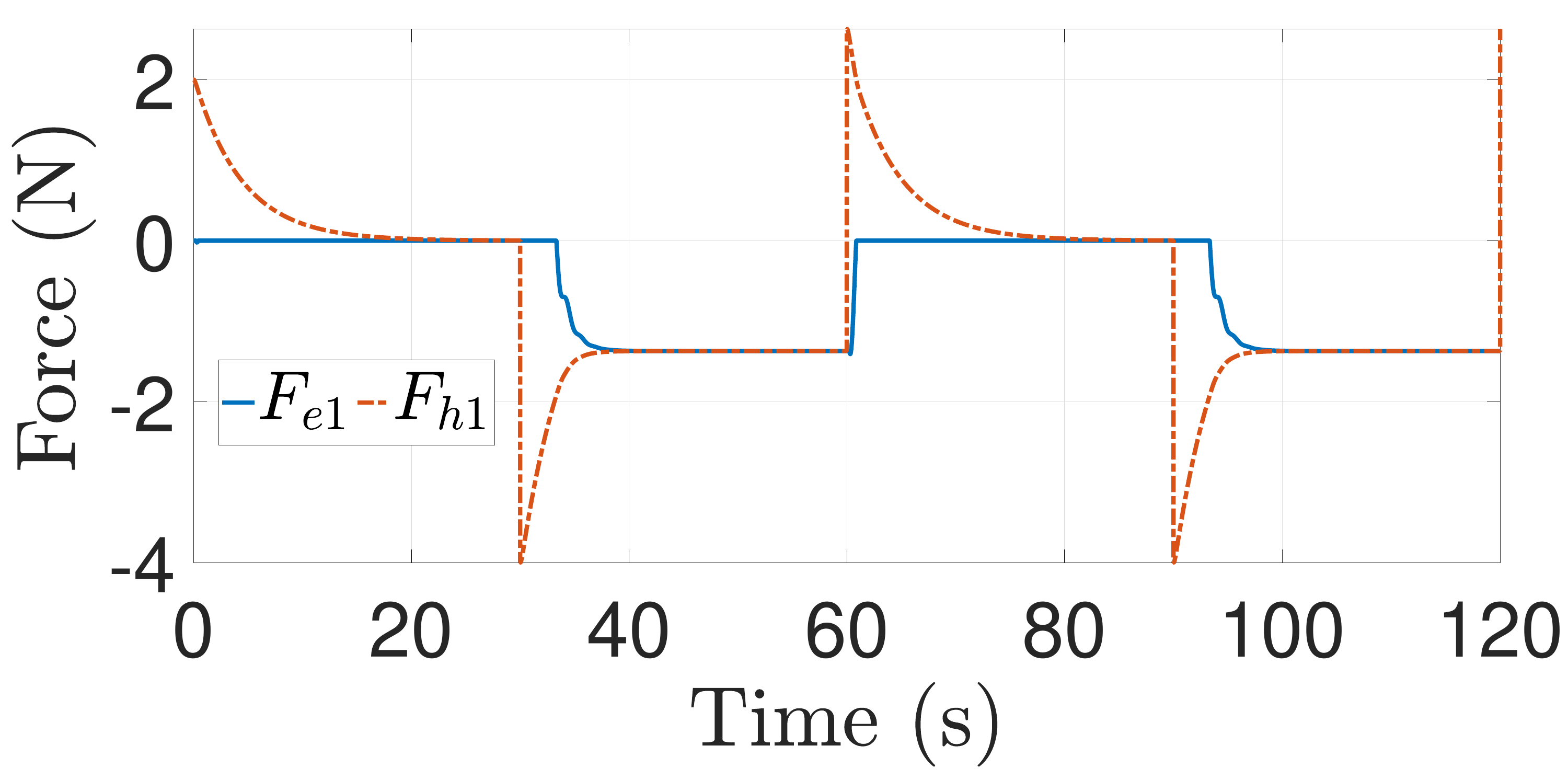}
\includegraphics[width=0.49\linewidth]{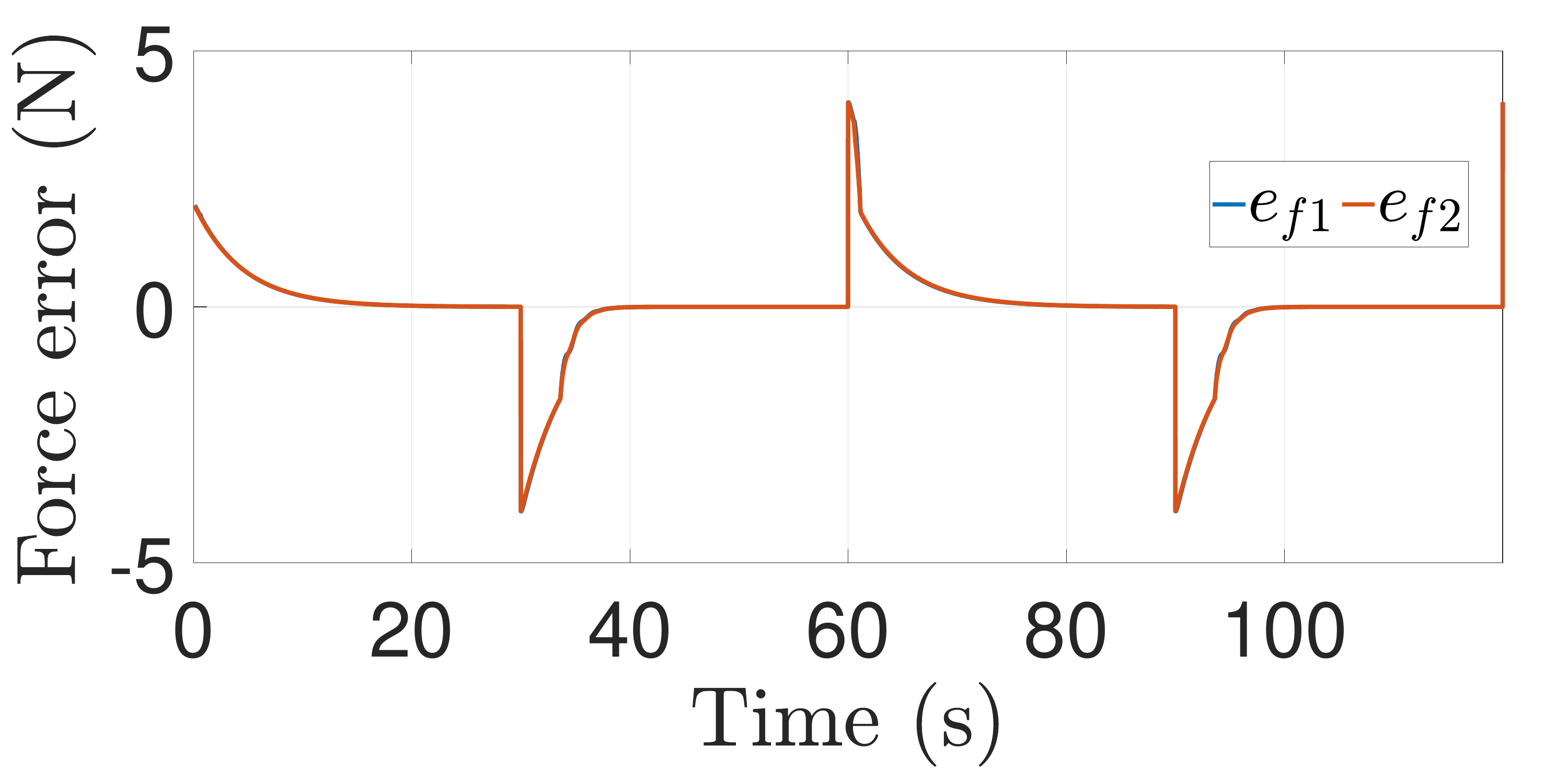}
    \caption{Force transmission results (first component only; the other one is similar) and force tracking errors.}
    \label{fig:force}
\end{figure}

Fig.~\ref{fig:force} presents effective force transmission in both cases, so the human can accurately feel the force exerted by the environment remotely. The root-mean-square error of force tracking over the simulation horizon is found to be $1.2993$~(N). Apart from \texttt{patternsearch}, we have experimented with the \texttt{particleswarm} and \texttt{surrogateopt} optimization schemes on MATLAB, giving similar or slightly worse results. Our simulation results illustrate transmission of positions in free motion and of forces in hard contact.

\section{CONCLUSION}

This paper addresses bilateral teleoperation between nonlinear robots
with constant communication delays while directly transmitting the
environmental contact force. A passivity-shortage characterization of
the nonlinear slave dynamics is derived, leading to an LMI
condition ensuring a dissipation inequality. The USP communication law is then applied to guarantee stability of
the delayed teleoperation loop and transparency in position and force transmission under appropriate conditions. The channel parameters
are tuned via optimization to improve 
transparency.

Future work will investigate the case of dimension mismatch between master and slave, extended neural network-inspired communication channels~\cite{tho2025neural} capable of
transmitting richer information such as velocities,
as well as experimental validation on robotic teleoperation platforms~\cite{pattanapong}.

\bibliographystyle{IEEEtran}
\bibliography{ref}
\end{document}